\def \bf#1 {\textbf{#1 }}
\def \sumt {\sum\limits}
\providecommand{\keywords}[1]
{
  \small	
  \textbf{\textit{Keywords:}} #1
}
\renewenvironment{proof}{\begin{addmargin}[1em]{0em}\begin{newproof}}{\end{newproof}\end{addmargin}\qed}
\newtheorem{lm}{Lemma}
\newtheorem{cor}{Corollary}
\newtheorem*{lm*}{Lemma}
\newtheorem{defin}{Definition}
\newtheorem*{defin*}{Definition}
\def \sumt {\sum\limits}
\def \cN {\mathcal{N}}
\begin{document}

\title{New exponential law for real networks}
\author{Mikhail Tuzhilin%
  \thanks{Affiliation: Moscow State University, Electronic address: \texttt{mikhail.tuzhilin@math.msu.ru};}
}
\date{}
\maketitle

\begin{abstract}
In this article we have shown that the distributions of ksi satisfy an exponential law for real networks while the distributions of ksi for random networks are bell-shaped and closer to the normal distribution. The ksi distributions for Barabasi-Albert and Watts-Strogatz networks are similar to the ksi distributions for random networks (bell-shaped) for most parameters, but when these parameters become small enough, the Barabasi-Albert and Watts-Strogatz networks become more realistic with respect to the ksi distributions.
\end{abstract}

\keywords{Ksi-centrality, invariants, local and global characteristics of networks, real networks}


\section{Introduction}

For real networks (based on real data), there are two fundamental invariants --- a large average clustering coefficient~\cite{WS} and a power law in the degree distribution~\cite{BA1}. Both of these invariants are based on centrality: degree centrality and local clustering coefficient. These invariants give rise to two classes of networks: scale-free and small-world, respectively. It turns out that most real networks are both scale-free and small-world. In this article, we introduce a new invariant for real networks --- the exponential distribution law of centrality ksi~\cite{Tuz}. We call the class of networks with an exponential distribution law of centrality ksi exponential-ksi networks. To show that this is invariant, we took 40 real networks from 3 databases with different types, numbers of nodes and edges and showed that all of these networks are exponential-ksi networks. To see this more clearly, we present log plots similar to those made when the scale-free property was shown on a log-log scale~\cite{BA2}. Moreover, we showed that random networks have a bell-shaped distribution of ksi and thus are not exponential-ksi networks. We also tested these properties at Watts-Strogatz small-world~\cite{WS} and Barabasi-Albert scale-free~\cite{BA2} networks and showed that they are exponential-ksi networks only for a small interval of parameters. In this case, they are more tree-like and thus more similar to the shape of real networks. 

\section{Prerequisites}

Let's give the basic denotations. Consider connected undirected graph $G$ with $n$ vertices. Denote by $A = A(G) = \{a_{ij}\}$ adjacency matrix of $G$. Denote by $\cN(i)$ a neighborhood of the vertex $i$ (the vertices which are adjacent to $i$) and by $d_i$ the degree of $i$. For any two disjoint subsets of vertices $H, K\subset V(G)$ denote the number of edges with one end in $H$ and another in $K$ by $E(H, K) = \big|{(v,w): v\in H,\ w\in K}\big|$.

A ksi-centrality is defined by following:
\begin{defin}
    For each vertex $i$ \bf{ksi-centrality} $\xi_{i}$ is the relation of total number of neighbors of $i$-th neighbors except between themselves divided by the total number of neighbors of $i$:
    $$  
        \xi_i = \xi(i) = \frac {\Big|E\big(\cN(i), V\setminus \cN(i)\big)\Big|} {\big|\cN(i)\big|} = \frac {\Big|E\big(\cN(i), V\setminus \cN(i)\big)\Big|} {d_i}.
    $$
\end{defin}

\begin{defin}
    A network with an exponential distribution law of centrality ksi is called \bf{exponential-ksi network}.
\end{defin}

For quick computations, the value $\Big|E\big(\cN(i), V\setminus \cN(i)\big)\Big|$ can be found in terms of product of adjacency matrix by two columns  of adjacency matrix.

\begin{lm}
$$
    E\big(\cN(i), V\setminus \cN(i)\big) = \sumt_{j, k\in V(G)} a_{ij} a_{jk} \overline a_{ki},
$$ where $\overline a_{ki} = 1-a_{ki}.$
\end{lm}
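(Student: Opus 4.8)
The plan is to read the right-hand side as an indicator sum and match it, summand by summand, with the defining count on the left. First I would fix the standing conventions for the adjacency matrix of a simple graph: each entry $a_{ij}\in\{0,1\}$, the matrix is symmetric ($a_{ij}=a_{ji}$), and $a_{ii}=0$. In particular $i\notin\cN(i)$, so $i$ belongs to the complementary block $V\setminus\cN(i)$, and the two sets $\cN(i)$ and $V\setminus\cN(i)$ partition $V(G)$.

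Next I would unpack the combinatorial meaning of each of the three factors in a generic summand $a_{ij}\,a_{jk}\,\overline a_{ki}$. The factor $a_{ij}$ equals $1$ exactly when $j\in\cN(i)$; the factor $a_{jk}$ equals $1$ exactly when $\{j,k\}$ is an edge of $G$; and $\overline a_{ki}=1-a_{ki}$ equals $1$ exactly when $k\notin\cN(i)$, i.e. $k\in V\setminus\cN(i)$. Since all three factors are $0/1$-valued, their product is the indicator of the event that the ordered pair $(j,k)$ satisfies simultaneously $j\in\cN(i)$, $k\in V\setminus\cN(i)$, and $jk\in E(G)$. I would then sum over all $j,k\in V(G)$: because $V$ splits into the disjoint parts $\cN(i)$ and $V\setminus\cN(i)$, every crossing edge has exactly one endpoint in each part and is therefore encoded by a single ordered pair $(j,k)$ contributing weight $1$, while edges lying inside one block are killed by $a_{ij}$ or by $\overline a_{ki}$. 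Hence the sum counts each edge between $\cN(i)$ and $V\setminus\cN(i)$ once and only once, which is precisely $E\big(\cN(i),V\setminus\cN(i)\big)$.

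The one point that deserves care — and the only place I expect any friction — is the bookkeeping around the vertex $i$ itself and the absence of double counting. The apparent symmetry $a_{jk}=a_{kj}$ does not produce an overcount, because the roles of $j$ and $k$ are broken by the asymmetric conditions $j\in\cN(i)$ (through $a_{ij}$) and $k\notin\cN(i)$ (through $\overline a_{ki}$): of the two orderings of a crossing edge, only one survives. As for $i$, placing it in $V\setminus\cN(i)$ means the terms with $k=i$ contribute $\sumt_{j} a_{ij}\,a_{ji}=\sumt_{j} a_{ij}=d_i$, so the $d_i$ edges from $i$ to its own neighbors are included. This matches the set-theoretic definition of $E\big(\cN(i),V\setminus\cN(i)\big)$, so no correction term is needed; one merely has to confirm that the left-hand side is intended to include these edges rather than to exclude $i$ by convention, after which the claimed identity follows.
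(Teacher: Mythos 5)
Your proof is correct and follows essentially the same route as the paper's: both read each summand $a_{ij}a_{jk}\overline a_{ki}$ as a $0/1$ indicator over ordered pairs $(j,k)$ and verify that the $k=i$ terms contribute exactly the $d_i$ edges from $\cN(i)$ to $i$, while the remaining terms count the pairs with $i\sim j\sim k$ and $k\not\sim i$. Your explicit remarks on why the symmetry $a_{jk}=a_{kj}$ causes no double counting and on the convention $a_{ii}=0$ make the bookkeeping slightly more careful than the paper's case-split presentation, but the underlying argument is identical.
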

\begin{proof}
    Let's fix $i$ and note that
    $$\sumt_{j\in V(G)} a_{ij} a_{jk} = \begin{cases}
        d_i, & k = i, \\
        1, & i\sim j\sim k, \\
        0, & \text{otherwise},
    \end{cases}
    \qquad\text{and}\qquad
    1 - a_{ki} = \begin{cases}
        1, & k = i, \\
        1, & k\not\sim i, \\
        0, & k\sim i.
    \end{cases}
    $$
    Therefore, 
    $$
        \Big|E\big(\cN(i), V\setminus \cN(i)\big)\Big| = d_i+ \big|k, j\in V(G): i\sim j\sim k, k\not\sim i\big| =  \sumt_{j, k\in V(G)} a_{ij} a_{jk} \overline a_{ki}.
    $$
\end{proof}

\begin{cor}\label{cor1} Let's $A$ be adjacency matrix of a graph for each vertex $i$
    $$
        \xi_i = \frac {\Big(A^2\cdot\overline A\Big)_{ii}} {\big(A^2\Big)_{ii}},
    $$
    where $\overline{A} = I-A$ for $I$ --- matrix of all ones.
\end{cor}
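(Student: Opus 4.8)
The plan is to establish Corollary 1 as a direct matrix-form restatement of the Lemma just proved. The Lemma gives me the numerator of $\xi_i$ as a triple sum $\sumt_{j,k} a_{ij} a_{jk} \overline{a}_{ki}$, so my first task is to recognize this sum as a diagonal entry of a matrix product. I would write the triple sum by first contracting over $j$: the inner sum $\sumt_j a_{ij} a_{jk}$ is exactly $(A^2)_{ik}$, since matrix multiplication gives $(A^2)_{ik} = \sumt_j a_{ij} a_{jk}$. Then the full numerator becomes $\sumt_k (A^2)_{ik} \overline{a}_{ki}$, which is the $(i,i)$ entry of the product $A^2 \cdot \overline{A}$, because $(A^2 \cdot \overline{A})_{ii} = \sumt_k (A^2)_{ik} (\overline{A})_{ki}$.

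For the denominator, I would recall that $\xi_i$ has denominator $d_i = |\cN(i)|$ in the Definition, and observe that $d_i = (A^2)_{ii}$. This is the standard fact that the diagonal of $A^2$ counts closed walks of length two from $i$ back to $i$; each such walk corresponds to an edge $\{i,k\}$ traversed forward and back, so $(A^2)_{ii} = \sumt_k a_{ik} a_{ki} = \sumt_k a_{ik}^2 = \sumt_k a_{ik} = d_i$, using that the graph is undirected and that $a_{ik}$ is $0$ or $1$. Substituting both expressions into the definition $\xi_i = |E(\cN(i), V\setminus\cN(i))| / d_i$ then yields the claimed formula
$$
\xi_i = \frac{(A^2 \cdot \overline{A})_{ii}}{(A^2)_{ii}}.
$$

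I do not expect any genuine obstacle here, since the result is essentially a notational repackaging of the Lemma. The only point requiring a moment of care is the definition of $\overline{A}$: the Corollary's displayed formula and the surrounding text conflict, describing $\overline{A}$ both as $I - A$ and as having entries $\overline{a}_{ki} = 1 - a_{ki}$. The latter matches the Lemma and is what the computation actually uses, so I would note that $\overline{A} = J - A$, where $J$ is the all-ones matrix, and treat the "$I$" in the statement as a typo for the all-ones matrix $J$. With that clarification, the substitution of $(\overline{A})_{ki} = 1 - a_{ki} = \overline{a}_{ki}$ is immediate and the proof is complete.
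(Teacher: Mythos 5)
Your proof is correct and is exactly the derivation the paper intends: the corollary is stated as an immediate consequence of the Lemma, obtained by contracting the triple sum over $j$ to get $(A^2)_{ik}$, then over $k$ against $\overline{a}_{ki}$ to get $\bigl(A^2\cdot\overline{A}\bigr)_{ii}$, together with the standard identity $(A^2)_{ii}=d_i$. One small remark: the ``$I$'' in the statement is not a typo you need to repair --- the paper explicitly declares ``$I$ --- matrix of all ones,'' so $\overline{A}=I-A$ is nonstandard notation for your $J-A$ rather than a conflict with the Lemma.
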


Since $\frac {\Big|E\big(\cN(i), V\setminus \cN(i)\big)\Big|} {d_i} = \frac {d_i} {d_i} = 1$, when vertices of $\cN(i)\cup \{i\}$ have no adjacent vertices except themselves, let's define ${\xi}_i = 1$ in the case, when $d_i = 0$. Also note, that our vertex $i\in V\setminus \cN(i)$, thus ksi-centrality $\xi_i$ is always greater or equal 1. Since the maximum number of edges from the neighborhood $\cN(i)$ to $V\setminus \cN(i)$ can be larger than $\Big|E\big(\cN(i), V\setminus \cN(i)\big)\Big|$ let's give

\begin{defin}
    For each vertex $i$ \bf{normalized ksi-centrality} $\hat\xi_{i}$ is defined by following
    $$  
        \hat{\xi}_i = \hat{\xi}(i) = \frac {\Big|E\big(\cN(i), V\setminus \cN(i)\big)\Big|} {\big|\cN(i)\big|\cdot\big|V\setminus \cN(i)\big|} = \frac {\Big|E\big(\cN(i), V\setminus \cN(i)\big)\Big|} {d_i (n-d_i)}.
    $$
\end{defin}

It is easy to see that by this definition $\frac 1 {n-d_i}\leq\hat{\xi}_i\leq 1$. Since $\frac {\Big|E\big(\cN(i), V\setminus \cN(i)\big)\Big|} {d_i(n-d_i)} = \frac {d_i} {d_i(n-d_i)} = \frac 1 {n-d_i}$, when vertices of $\cN(i)\cup \{i\}$ have no adjacent vertices except themselves, let's define $\hat{\xi}_i = \frac 1 n$ in the case, when $d_i = 0$.

Let's define for the whole graph $G$ average normalized ksi-coefficient.
\begin{defin}
    The average normalized ksi-coefficient
    $$  
        \hat\Xi(G) = \frac 1 n \sumt_{i\in V(G)} \hat\xi_i.
    $$
\end{defin}

\section{Calculations}

The author's colleague Ivan Samoylenko provided a fast algorithm for calculating ksi and normalized ksi-centralities based on the formula in the corollary~\ref{cor1}. This code is available on Github (\url{https://github.com/Samoylo57/ksi-centrality}).

\section{Results}
\subsection{Exponential law}
We computed ksi-centrality distributions for 40 different real-world networks (see figures~\ref{fig:1}--\ref{fig:5}) and more than 200 Erdos-Renyi graphs with different parameters. The real networks differ in types, properties, and relationships between nodes and edges. We provide a list of references in the table~\ref{tab:net}. We see that for each network from the table distribution of ksi-centrality centrality is similar to an exponential law and thus they are exponential-ksi. To show this more accurate, we constructed a linear approximation of the data in log y-scale and compared the resulting exponential function with distribution (see figure~\ref{fig:6}).

\begin{figure}[H]\vspace{-10pt}
    \centering
	\includegraphics[width = 0.71\textwidth]{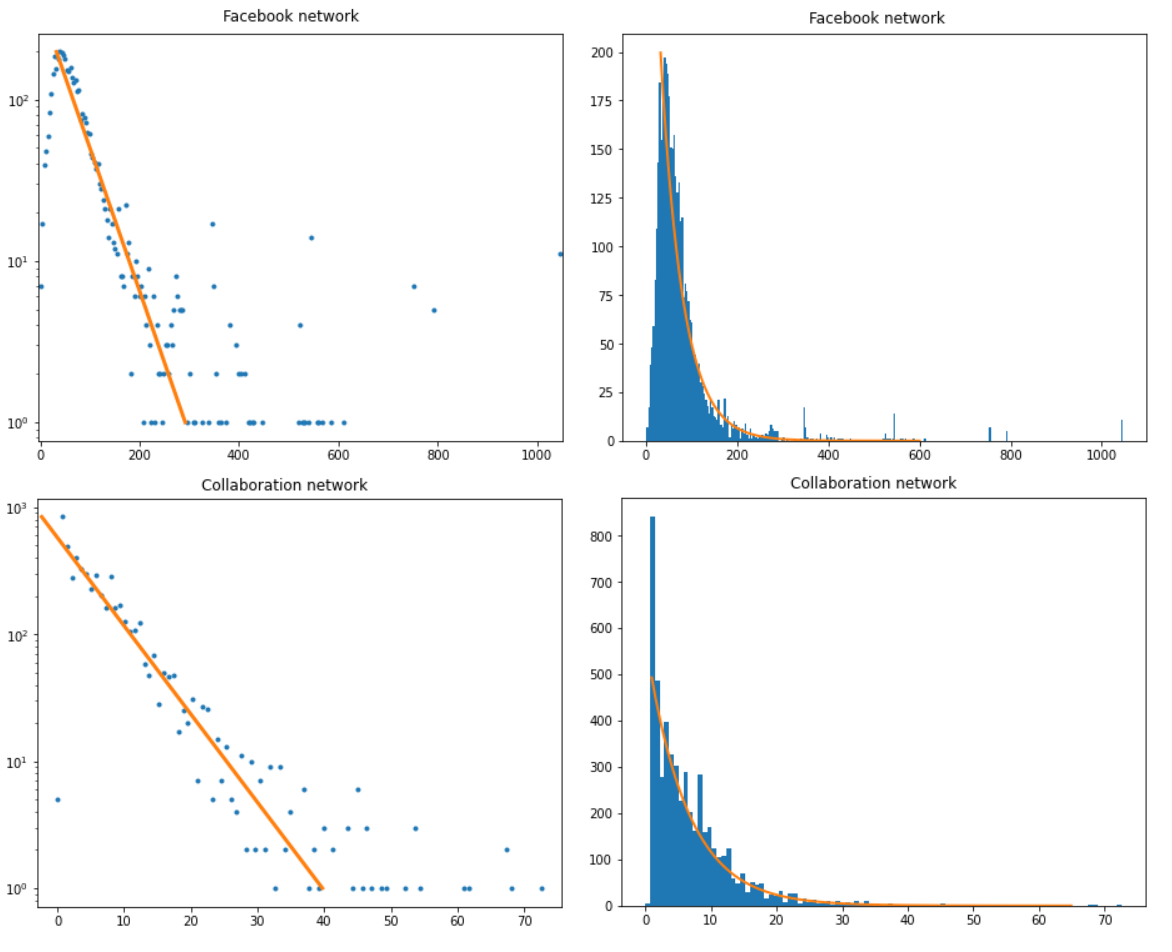}
	\caption{Linear approximation of ksi distribution on log-scale $y$ and comparison of fitted exponential function with it for Facebook and Collaboration networks.}
	\label{fig:6}
\end{figure}

The ksi distribution for each random network is bell-shaped (see figure~\ref{fig:12}).

\begin{figure}[H]
    \centering
	\includegraphics[width = 1\textwidth]{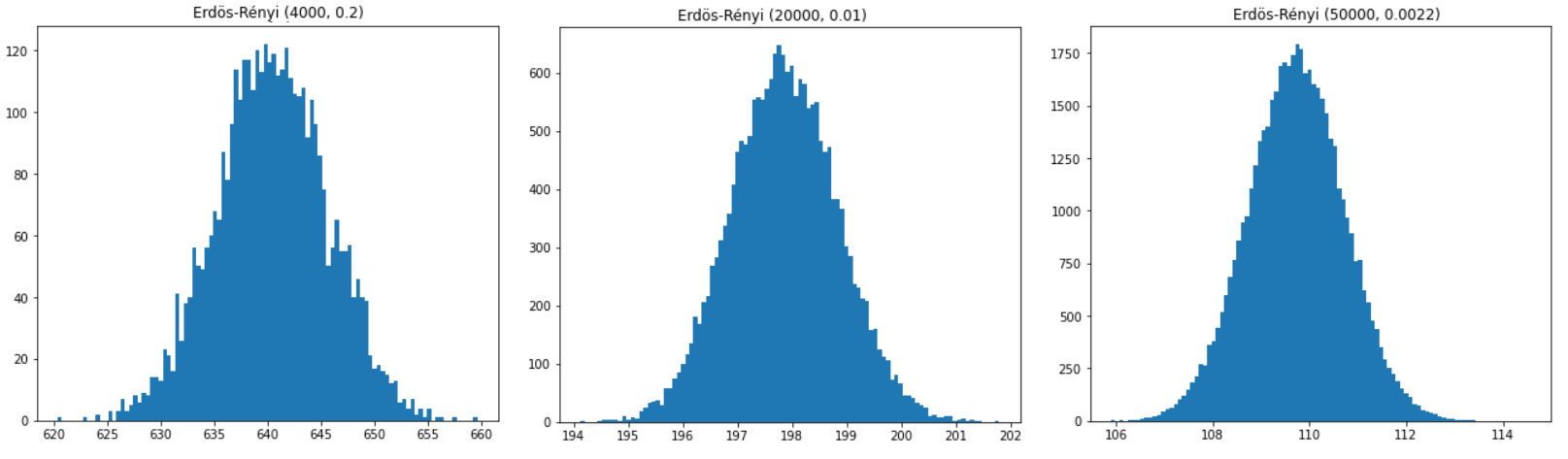}
	\caption{Ksi distribution for Erdos-Renyi graphs.}
	\label{fig:12}
\end{figure}

\subsection{Normalized ksi coefficient}
In the figure~\ref{fig:0} we have calculated the average normalized ksi coefficient for each of these 40 networks. Although the average normalized coefficient may tend to 0, for example for the Watts-Strogatz network, we see that for real networks with a large number of nodes the order of this value differs from $\frac 1 n$. For example, for the Stack Overflow network with 670816 nodes the average normalized ksi coefficient is $4\cdot10^{-4}$, while $\frac 1 {670816}\sim 10^{-6}$.

\subsection{Comparison with Watts-Strogatz and Barabasi-Albert networks}

In the figure~\ref{fig:7} we calculated the ksi-distributions for Barabasi-Albert network with 2000 nodes with respect to the preferential attachment parameter $m$. We see that for $10\leq m\leq 70$ the ksi-distributions in for Barabasi-Albert networks are similar to those of random graphs, they become left-skewed with larger $m$. Therefore, the Barabasi-Albert network for a small $m\leq 10$  is exponential-ksi.

For Watts-Strogatz networks the picture is similar. First, for each fixed parameters $(k, p)$ of the 2000-node Watts-Strogatz network, we fit the distribution of ksi to an exponential distribution, as in figure~\ref{fig:6}.
For each fit, we calculated the root mean squared difference between the fitted log distributions and the logarithm of original one. In the figure~\ref{fig:11} on the left and middle we see that the smallest deviation from the exponential distribution is observed for the smallest parameters $k$ and $p$, similar to the Barabasi-Albert case. In the figure~\ref{fig:11} on the right we see that for very small parameters $k$ and $p$ the distribution of ksi also differs from exponential. We also calculated the dependence of ksi-distributions to $k$ and $p$ separately. In the figure~\ref{fig:8} we see that for fixed $p = 0.016$ the distribution of ksi becomes the distribution for random networks with $k\geq 160$ and for $k\leq 30$ also differ from the distributions in real data. The same picture with fixed $k = 67$ with increasing $p$ (see figure~\ref{fig:9}).

\begin{figure}[H]
    \centering
	\includegraphics[width = 1.0\textwidth]{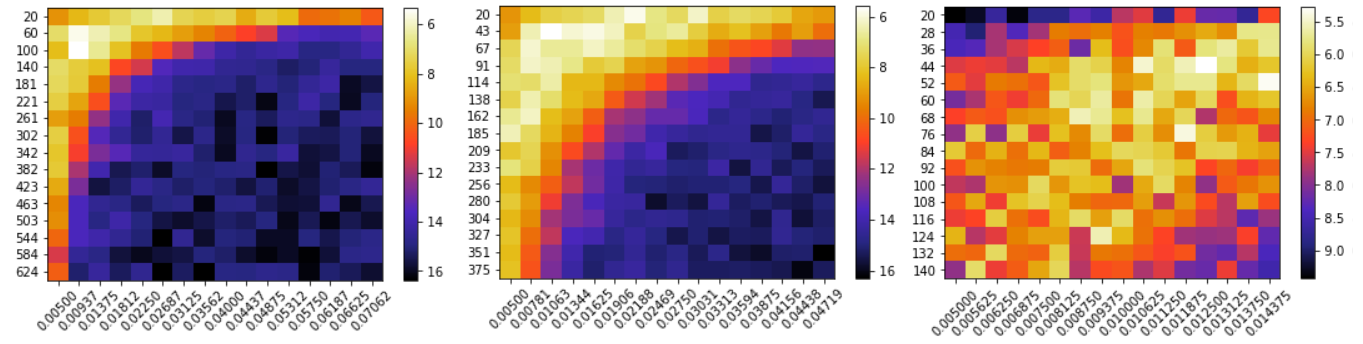}
	\caption{The root mean squared difference between the fitted (exponential) distributions and the original for fixed parameters $(k, p)$ of the 2000-node Watts-Strogatz network. The X-axis corresponds to $k$, and the Y-axis corresponds $p$.}
	\label{fig:11}
\end{figure}

We calculated ksi distributions with distributions similar to the exponential distribution for the Barabasi-Albert and Watts-Strogatz networks, also on a logarithmic scale (figure~\ref{fig:10}).

\begin{figure}[H]
    \centering
	\includegraphics[width = 0.8\textwidth]{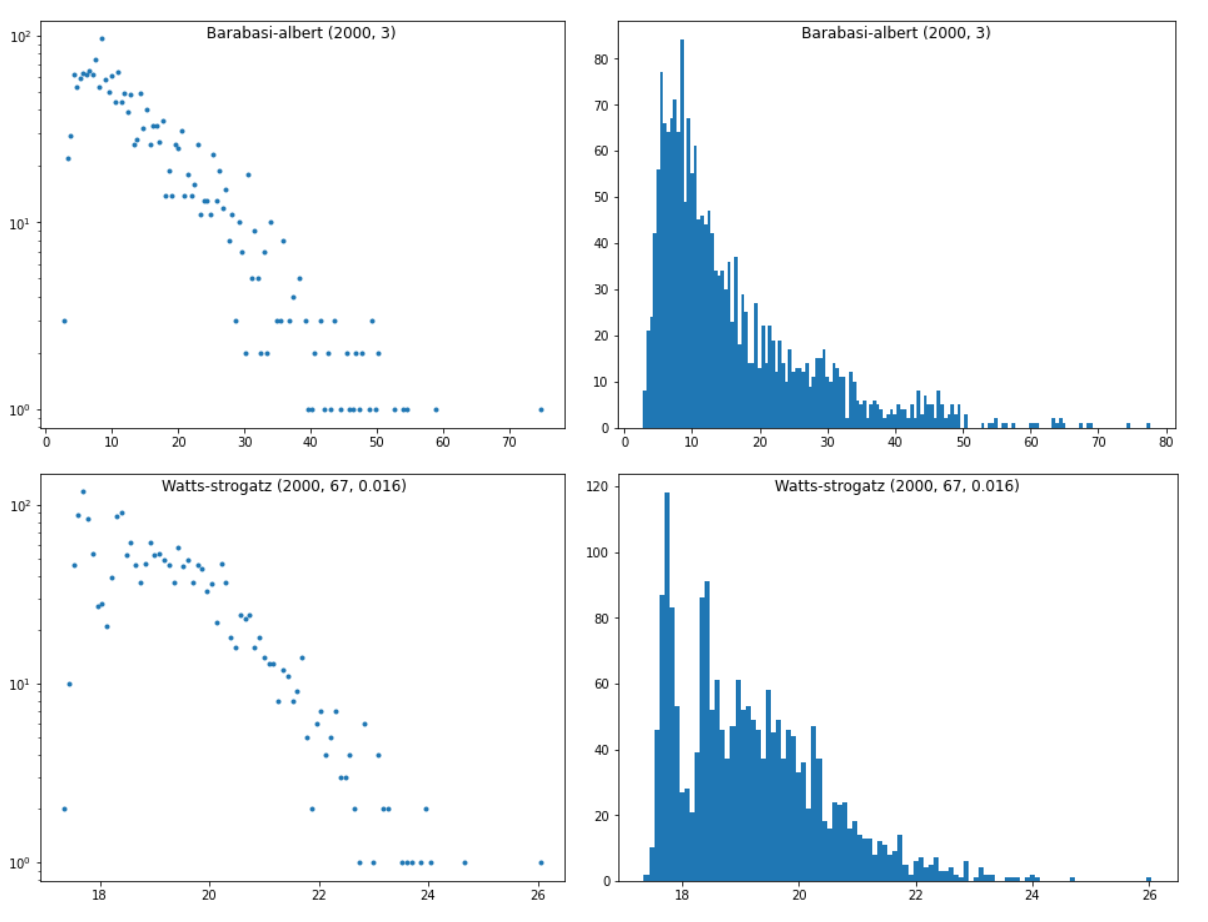}
	\caption{Initial ksi distributions (right) and in logarithmic scale (left) for the case where Barabasi-Albert and Watts-Strogatz networks are exponential-ksi networks. The number of nodes is 2000.}
	\label{fig:10}
\end{figure}

\section{Discussion}

We have shown that the distributions of ksi satisfy an exponential law for real networks while the distributions of ksi for random networks are bell-shaped and closer to the normal distribution. It turns out that the ksi distributions for Barabasi-Albert and Watts-Strogatz networks are similar to the ksi distributions for random networks (bell-shaped) for most parameters, but when these parameters become small enough, the Barabasi-Albert and Watts-Strogatz networks become more realistic (exponential-ksi) with respect to the ksi distributions. However, when the number of edges and the probability in the Watts-Strogatz network become very small, the gap between the number of vertices with large ksi and with small ksi becomes too large, and thus a large deviation from the exponential distribution appears. 

One explanation for this phenomenon may be that real networks have the so-called ``bow-tie'' structure, and the distribution of ksi-centrality exhibits exponential behavior, while random networks, as well as Barabasi-Albert and Watts-Strogatz networks with large parameters, are not bow-tie structures. If we consider sufficiently small parameters for the Barabasi-Albert and Watts-Strogatz networks, these networks become more tree-like and thus closer to the bow-tie structure. However, for the Watts-Strogatz network, the smaller the initial degrees, the less clustering in that network and the smaller the probability, the closer that network is to a ring lattice. For a ring lattice, the neighbor structure for each node is the same, and thus the ksi distribution is just a single number. Thus, there is a lower bound for the Watts-Strogatz network to be more ``realistic''.

\section{Acknowledgments}
We thank Ivan Samoylenko for useful comments and development of the code for calculation of ksi-centrality.

\newpage

\begin{figure}[h!]
    \centering
	\includegraphics[height = 0.9\textheight, width = 1\textwidth]{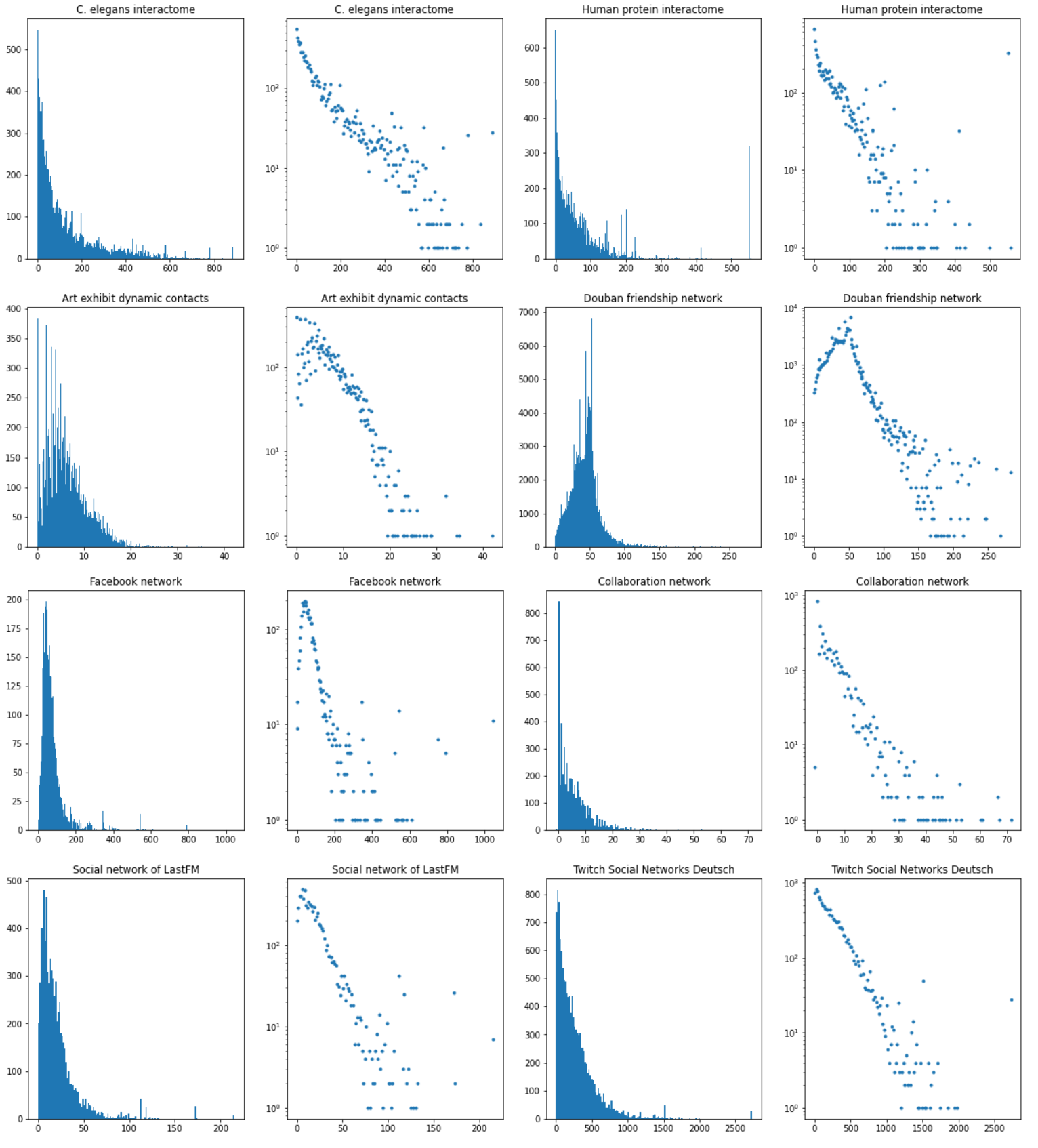}
	\caption{Ksi-distributions (left) and corresponding log-scale for y axes (right) for networks from table~\ref{tab:net}. The X-axis corresponds to the ksi value, and the Y-axis corresponds to the number of vertices with that ksi value.}
	\label{fig:1}
\end{figure}

\begin{figure}[h!]
    \centering
	\includegraphics[height = 0.9\textheight, width = 1\textwidth]{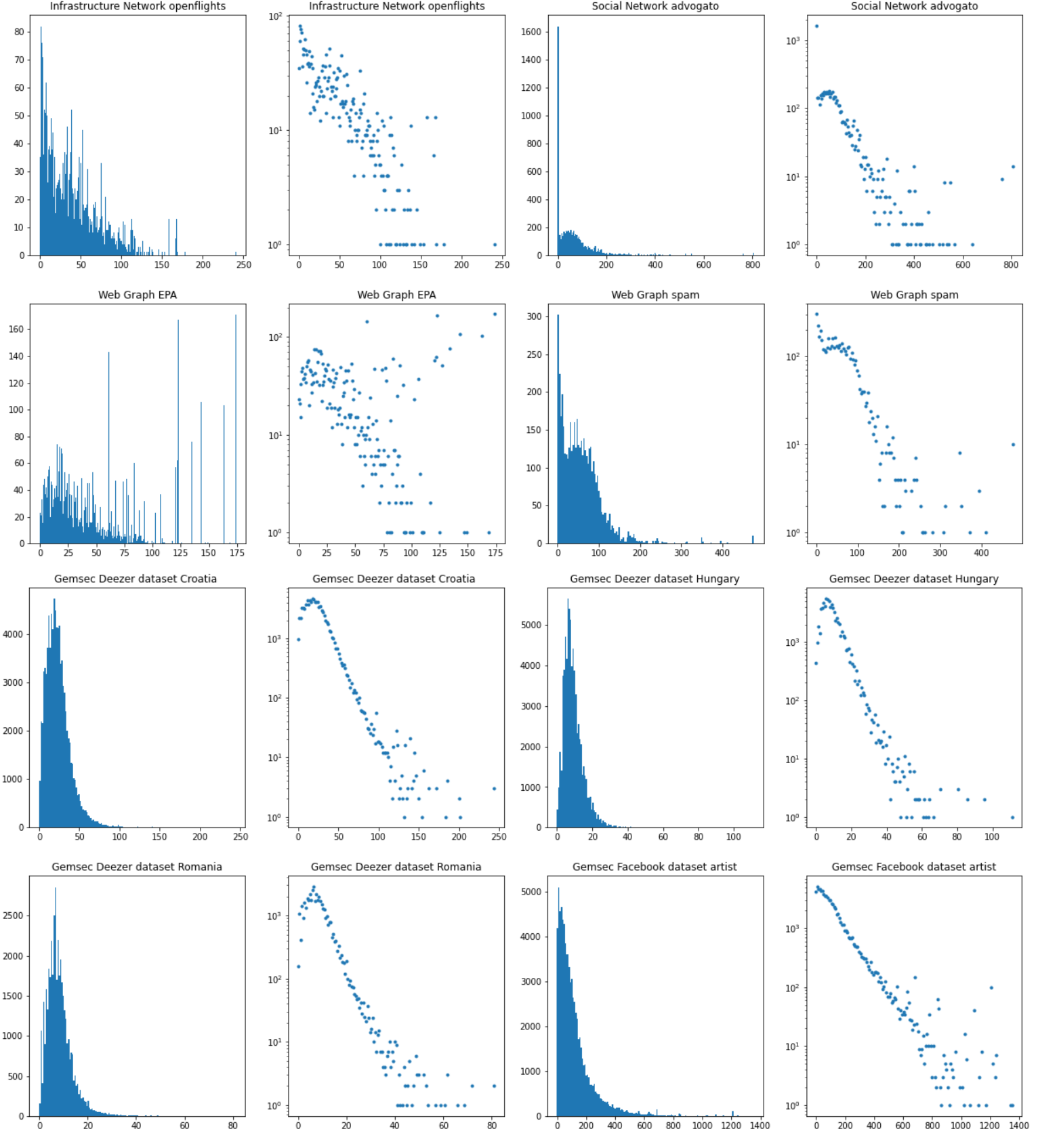}
	\caption{Ksi-distributions (left) and corresponding log-scale for y axes (right) for networks from table~\ref{tab:net}. The X-axis corresponds to the ksi value, and the Y-axis corresponds to the number of vertices with that ksi value.}
	\label{fig:2}
\end{figure}

\begin{figure}[h!]
    \centering
	\includegraphics[height = 0.9\textheight, width = 1\textwidth]{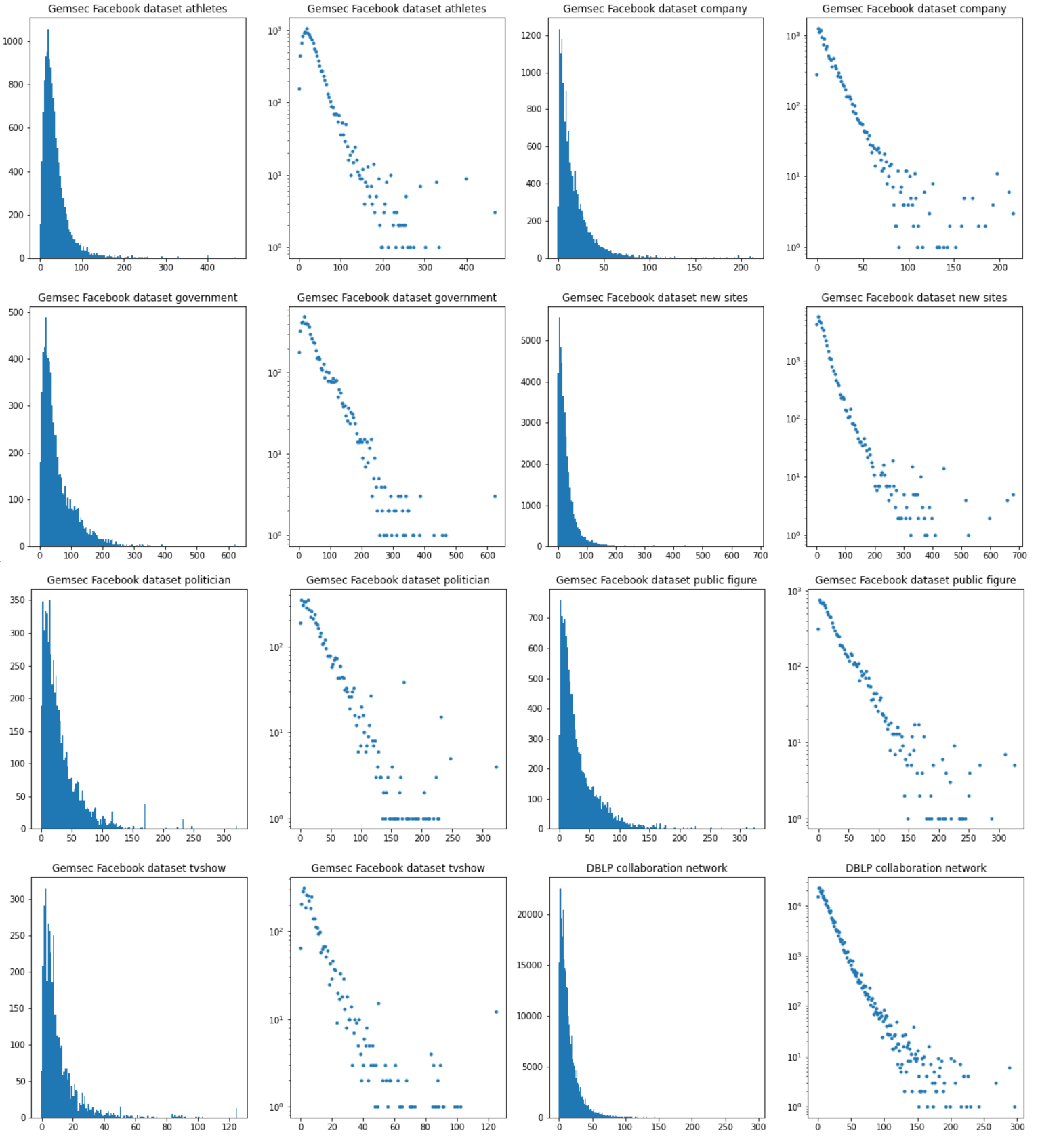}
	\caption{Ksi-distributions (left) and corresponding log-scale for y axes (right) for networks from table~\ref{tab:net}. The X-axis corresponds to the ksi value, and the Y-axis corresponds to the number of vertices with that ksi value.}
	\label{fig:3}
\end{figure}

\begin{figure}[h!]
    \centering
	\includegraphics[height = 0.9\textheight, width = 1\textwidth]{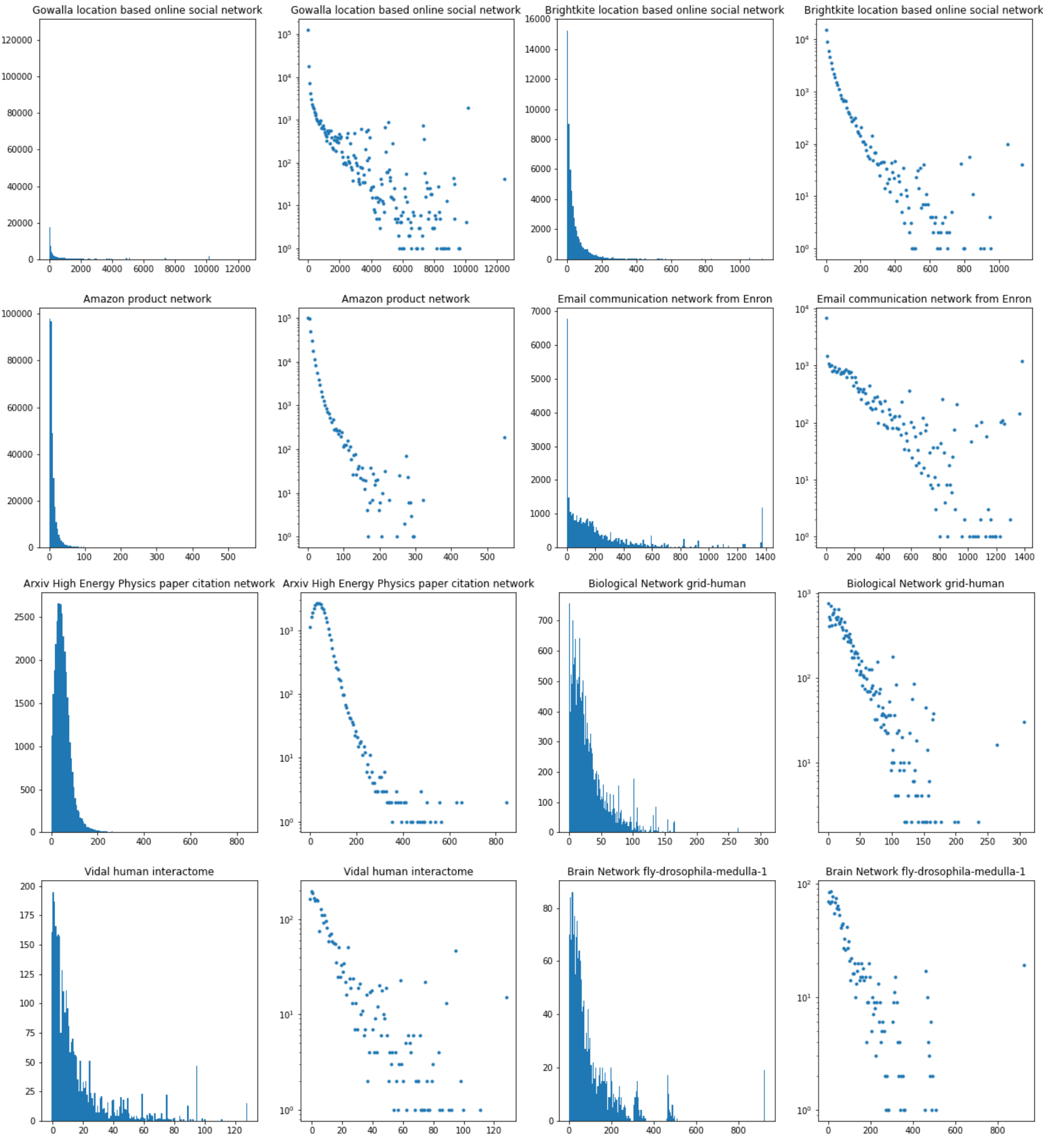}
	\caption{Ksi-distributions (left) and corresponding log-scale for y axes (right) for networks from table~\ref{tab:net}. The X-axis corresponds to the ksi value, and the Y-axis corresponds to the number of vertices with that ksi value.}
	\label{fig:4}
\end{figure}

\begin{figure}[h!]
    \centering
	\includegraphics[height = 0.9\textheight, width = 1\textwidth]{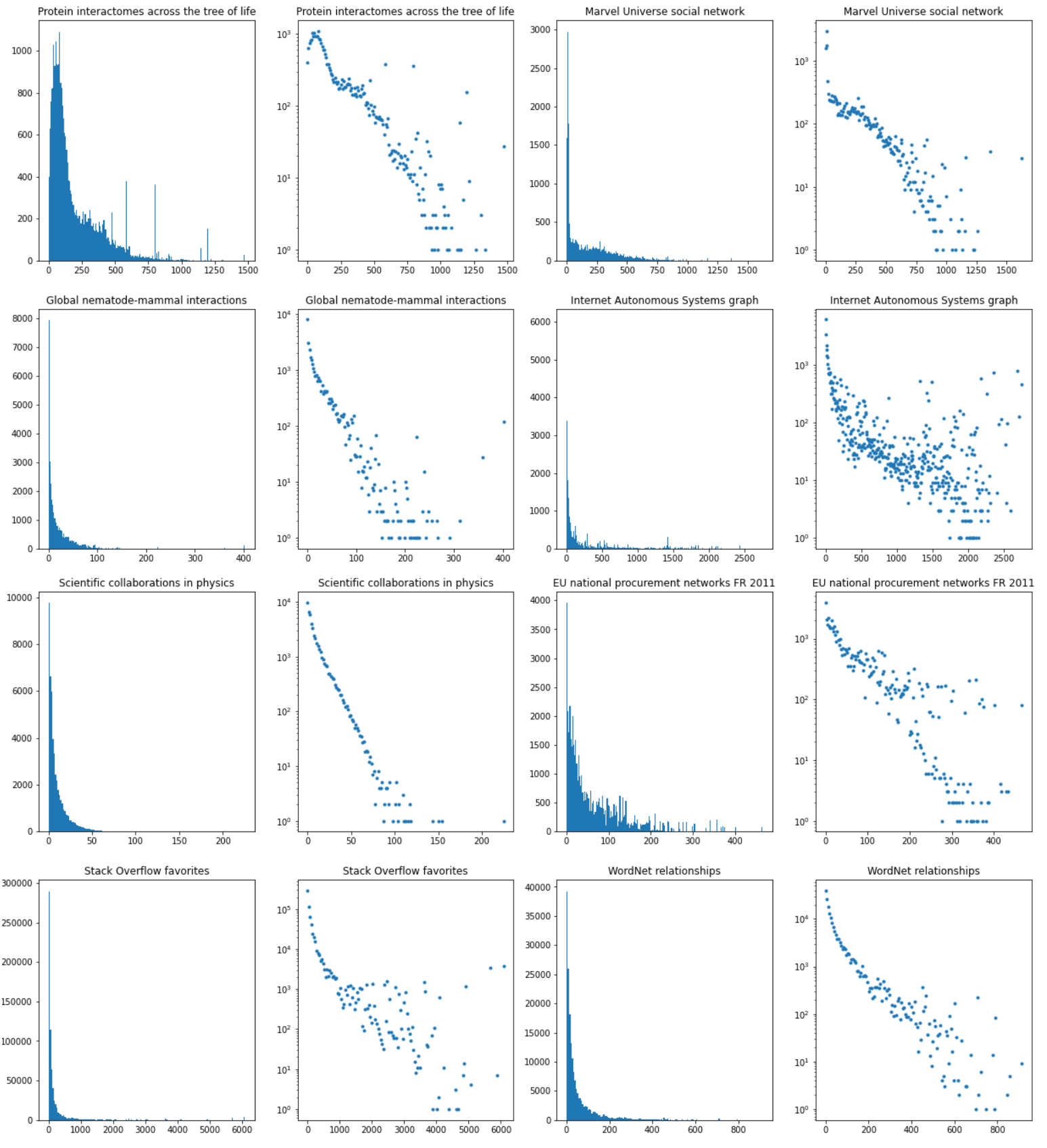}
	\caption{Ksi-distributions (left) and corresponding log-scale for y axes (right) for networks from table~\ref{tab:net}. The X-axis corresponds to the ksi value, and the Y-axis corresponds to the number of vertices with that ksi value.}
	\label{fig:5}
\end{figure}

\begin{table}[h!]
\resizebox{\textwidth}{!}{
\begin{tabular}{|c|c|}
     \hline
     Network name & Internet link \\
     \hline
    C. elegans interactome	\cite{bib1} &	\url{https://networks.skewed.de/net/celegans_interactomes} \\
Human protein interactome	\cite{bib2}&	\url{https://networks.skewed.de/net/reactome}	\\
Art exhibit dynamic contacts \cite{bib3}	&	\url{https://networks.skewed.de/net/sp_infectious}	\\
Douban friendship network	\cite{bib4}&	\url{https://networks.skewed.de/net/douban}	\\
Facebook network	\cite{bib5}&	\url{https://snap.stanford.edu/data/ego-Facebook.html}	\\
Collaboration network	\cite{bib6}&	\url{https://snap.stanford.edu/data/ca-GrQc.html}	\\
Social network of LastFM	\cite{bib7}&	\url{https://snap.stanford.edu/data/feather-lastfm-social.html}	\\
Twitch Social Networks Deutsch	\cite{bib8}&	\url{https://snap.stanford.edu/data/twitch-social-networks.html}	\\
Infrastructure Network openflights	\cite{bib9}&	\url{https://networkrepository.com/inf-openflights.php}	\\
Social Network advogato	\cite{bib9}&	\url{https://networkrepository.com/soc-advogato.php}	\\
Web Graph EPA	\cite{bib9}&	\url{https://networkrepository.com/web-EPA.php}	\\
Web Graph spam	\cite{bib9}&	\url{https://networkrepository.com/web-spam.php}	\\
Gemsec Deezer dataset Croatia 	\cite{bib10}&	\url{https://snap.stanford.edu/data/gemsec-Deezer.html}	\\
Gemsec Deezer dataset Hungary	\cite{bib10}&	\url{https://snap.stanford.edu/data/gemsec-Deezer.html}	\\
Gemsec Deezer dataset Romania	\cite{bib10}&	\url{https://snap.stanford.edu/data/gemsec-Deezer.html}	\\
Gemsec Facebook dataset artist	\cite{bib10}&	\url{https://snap.stanford.edu/data/gemsec-Facebook.html}	\\
Gemsec Facebook dataset athletes	\cite{bib10}&	\url{https://snap.stanford.edu/data/gemsec-Facebook.html}	\\
Gemsec Facebook dataset company	\cite{bib10}&	\url{https://snap.stanford.edu/data/gemsec-Facebook.html}	\\
Gemsec Facebook dataset government	\cite{bib10}&	\url{https://snap.stanford.edu/data/gemsec-Facebook.html}	\\
Gemsec Facebook dataset new sites	\cite{bib10}&	\url{https://snap.stanford.edu/data/gemsec-Facebook.html}	\\
Gemsec Facebook dataset politician	\cite{bib10}&	\url{https://snap.stanford.edu/data/gemsec-Facebook.html}	\\
Gemsec Facebook dataset public figure	\cite{bib10}&	\url{https://snap.stanford.edu/data/gemsec-Facebook.html	}\\
Gemsec Facebook dataset tvshow	\cite{bib10}&	\url{https://snap.stanford.edu/data/gemsec-Facebook.html}	\\
DBLP collaboration network	\cite{bib11}&	\url{https://snap.stanford.edu/data/com-DBLP.html}	\\
Gowalla location based online social	\cite{bib12}&	\url{https://snap.stanford.edu/data/loc-Gowalla.html}	\\
Brightkite location based online social	\cite{bib12}&	\url{https://snap.stanford.edu/data/loc-Brightkite.html}	\\
Amazon product network	\cite{bib11}&	\url{https://snap.stanford.edu/data/com-Amazon.html} \\
Email communication from Enron	\cite{bib13}&	\url{https://snap.stanford.edu/data/email-Enron.html}	\\
Arxiv High Energy paper citation	\cite{bib14}&	\url{https://snap.stanford.edu/data/cit-HepPh.html}	\\
Biological Network grid-human	\cite{bib9}&	\url{https://networkrepository.com/bio-grid-human.php}	\\
Vidal human interactome	\cite{bib15}&	\url{https://networks.skewed.de/net/interactome_vidal}	\\
Brain Network fly-drosophila-medulla	\cite{bib9}&	\url{https://networkrepository.com/bn-fly-drosophila-medulla-1.php}	\\
Protein interactomes across tree of life	\cite{bib16}&	\url{https://networks.skewed.de/net/tree-of-life}	\\
Marvel Universe social network	\cite{bib17}&	\url{https://networks.skewed.de/net/marvel_universe}	\\
Global nematode-mammal interactions	\cite{bib18}&	\url{https://networks.skewed.de/net/nematode_mammal}	\\
Internet Autonomous Systems graph	\cite{bib19}&	\url{https://networks.skewed.de/net/topology} \\
Scientific collaborations in physics	\cite{bib20}&	\url{https://networks.skewed.de/net/arxiv_collab}	\\
EU national procurement FR 2011	\cite{bib21}&	\url{https://networks.skewed.de/net/eu_procurements_alt}	\\
Stack Overflow favorites	\cite{bib22}&	\url{https://networks.skewed.de/net/stackoverflow}	\\
WordNet relationships	\cite{bib23}&	\url{https://networks.skewed.de/net/wordnet}	\\
\hline
\end{tabular}}

\caption{List of used networks with references.}
        \label{tab:net}
    \end{table}

\begin{figure}[h!]
    \centering
	\includegraphics[height = 0.9\textheight]{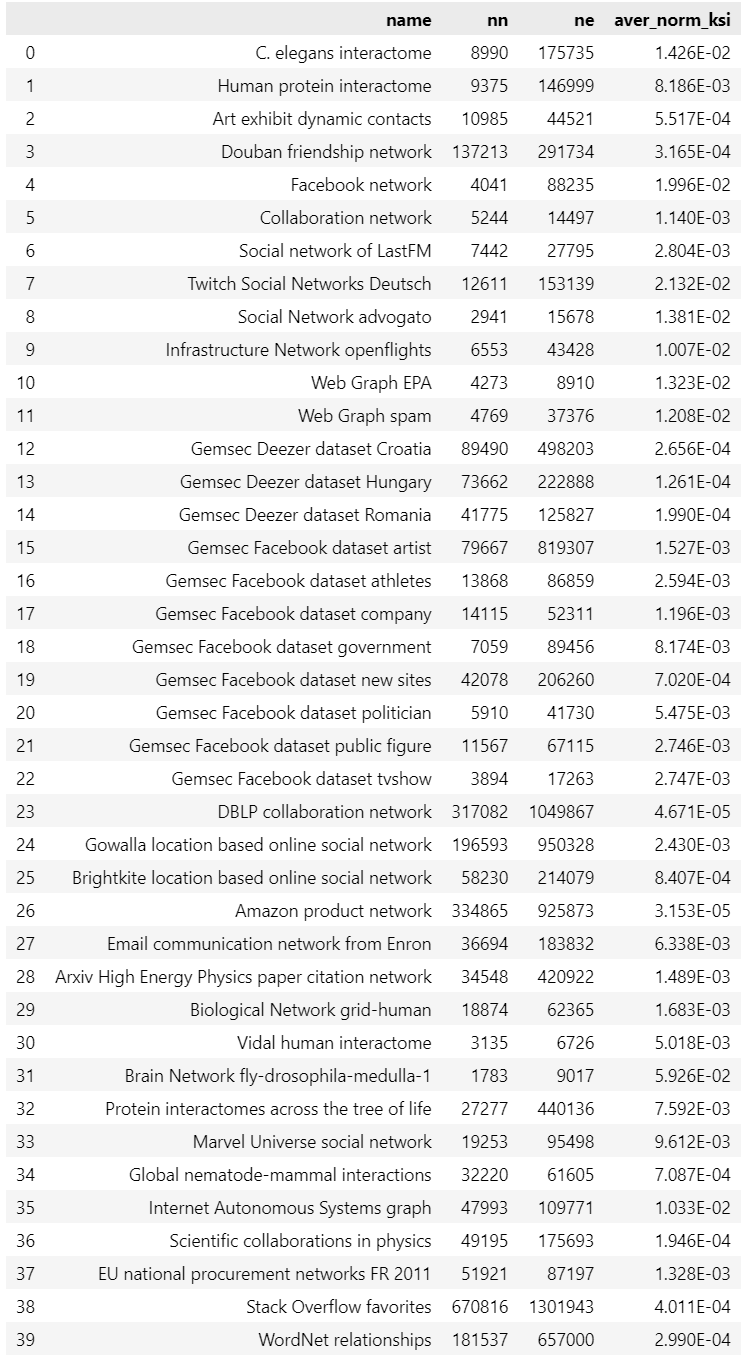}
	\caption{The average normalized ksi coefficient for each network from the table~\ref{tab:net}. ``nn'' is the number of nodes and ``ne'' is the number of edges for the corresponding network.}
	\label{fig:0}
\end{figure}

\begin{figure}[h!]
    \centering
	\includegraphics[height = 0.9\textheight, width = 1\textwidth]{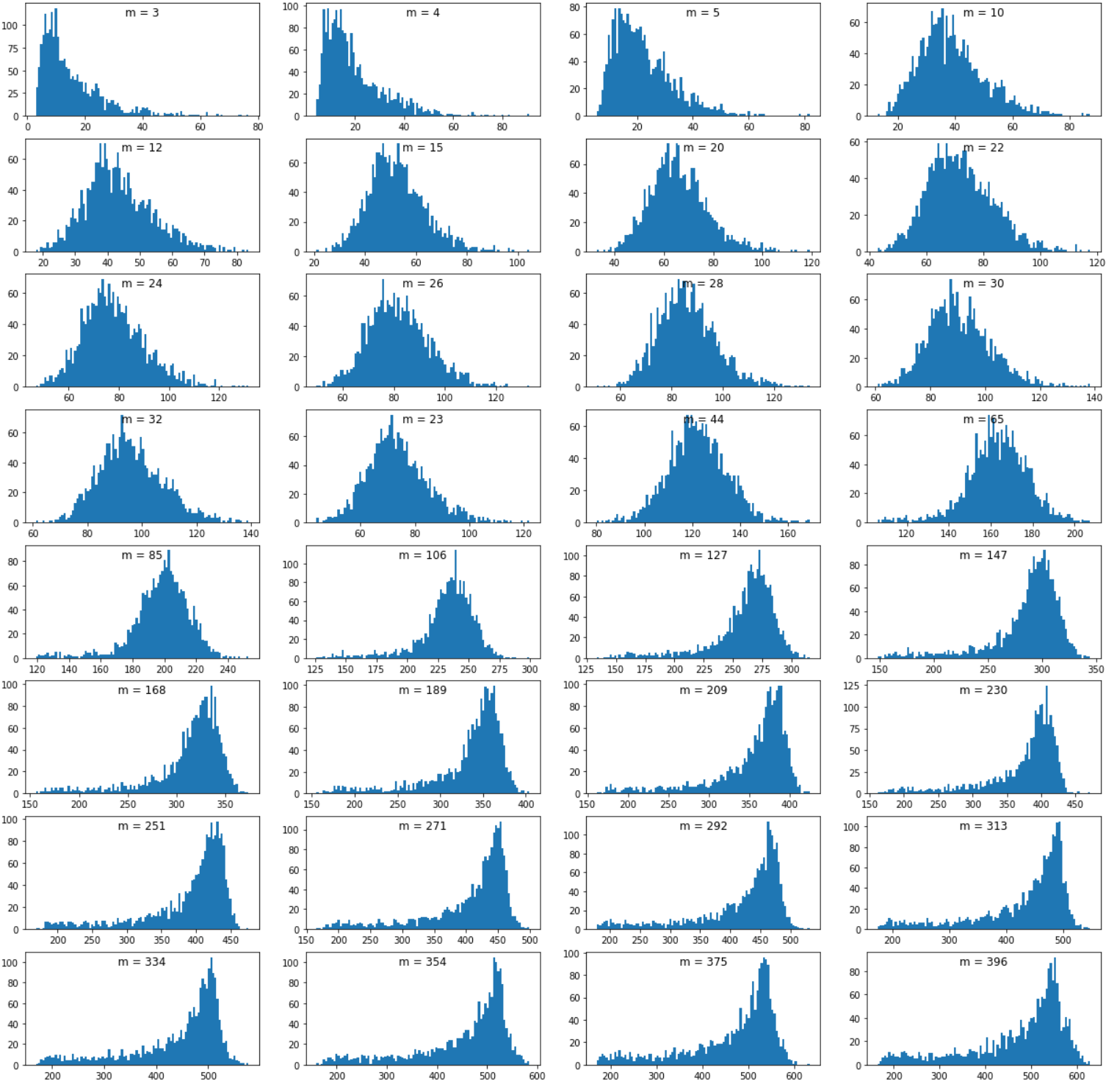}
	\caption{Dependence of the ksi-distribution on the number of preferential attachment parameter for the Barabasi-Albert network with 2000 nodes. The X-axis corresponds to the ksi value, and the Y-axis corresponds to the number of vertices with that ksi value.}
	\label{fig:7}
\end{figure}

\begin{figure}[h!]
    \centering
	\includegraphics[height = 0.9\textheight, width = 1\textwidth]{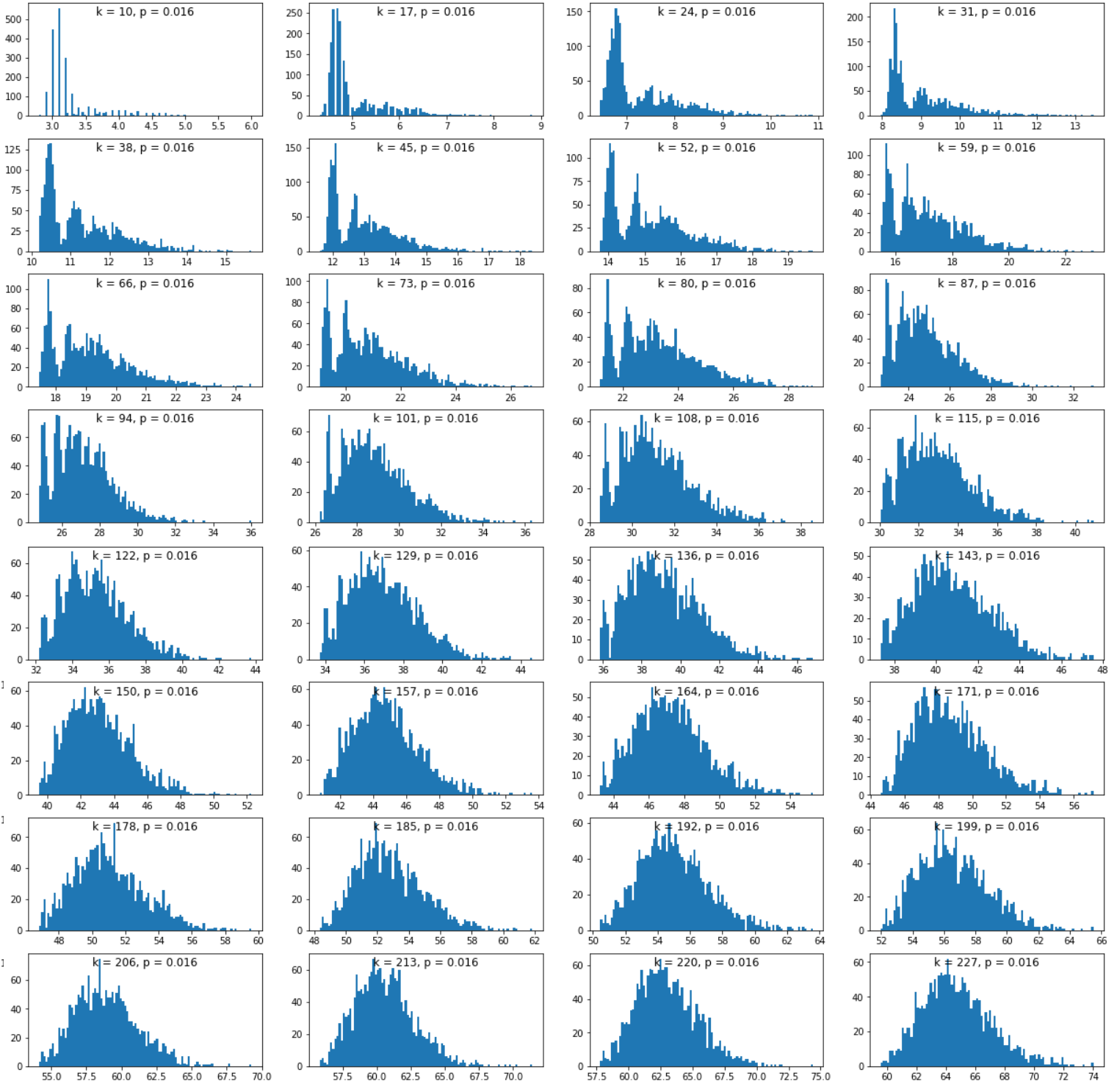}
	\caption{Dependence of the ksi-distribution on the number of initial vertices degree for the Watts-Strogatz network with 2000 nodes and $p = 0.016$. The X-axis corresponds to the ksi value, and the Y-axis corresponds to the number of vertices with that ksi value.}
	\label{fig:8}
\end{figure}

\begin{figure}[h!]
    \centering
	\includegraphics[height = 0.9\textheight, width = 1\textwidth]{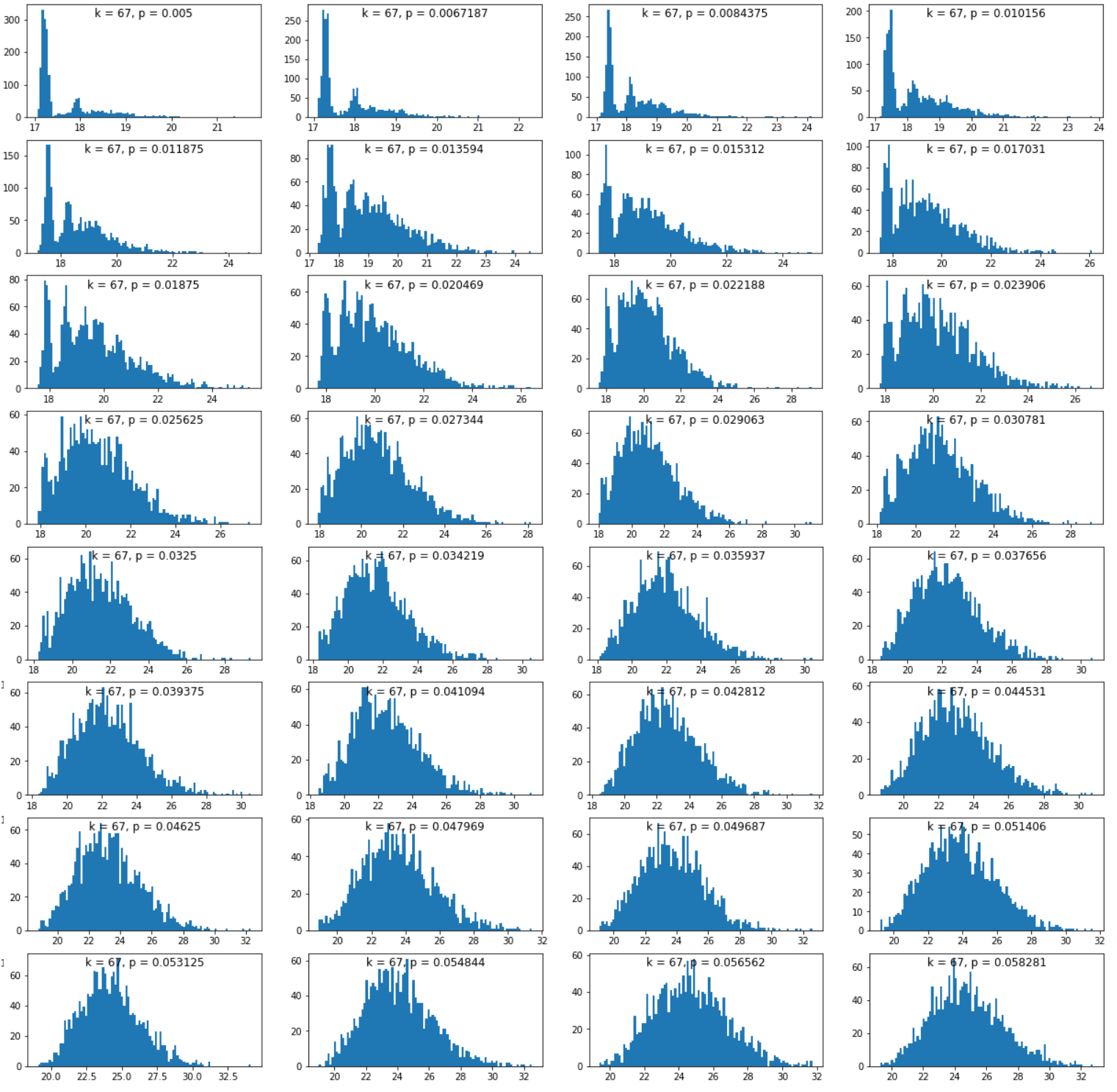}
	\caption{Dependence of the ksi-distribution on the probability $p$ for the Watts-Strogatz network with 2000 nodes and $k = 67$.  The X-axis corresponds to the ksi value, and the Y-axis corresponds to the number of vertices with that ksi value.}
	\label{fig:9}
\end{figure}

\end{document}